\renewcommand{\geq}{\geqslant}
\renewcommand{\leq}{\leqslant}
\newtheorem{Theorem}{Theorem}
\newtheorem{Lemma}{Lemma}
\newtheorem{col}{Corollary}
\begin{document}

\mainmatter

\title{On mixing in pairwise Markov random fields 
with application to social networks}

\titlerunning{On mixing in pairwise Markov random fields}

%\title{Gibbsian model of social networks with attributes}

\author{Konstantin Avrachenkov\inst{1}, \ Lenar Iskhakov\inst{2}, \ Maksim Mironov\inst{2}}

\institute{Inria Sophia Antipolis, 
2004 Route des Lucioles, Sophia-Antipolis, France
\email{k.avrachenkov@sophia.inria.fr}
\and
Moscow Institute of Physics and Technology, Dolgoprudny, Russia
\email{maxim-m94@mail.ru, lenar-iskhakov@yandex.ru} }

\authorrunning{K. Avrachenkov, L. Iskhakov, 
M. Mironov}

%\date{Sophia Antipolis, October 2015}

\maketitle

\begin{abstract}
We consider pairwise Markov random fields which have
a number of important applications in statistical physics,
image processing and machine learning such as Ising model
and labeling problem to name a couple. Our own motivation
comes from the need to produce synthetic models for
social networks with attributes. First, we give conditions for 
rapid mixing of the associated Glauber dynamics and consider
interesting particular cases. Then, for pairwise Markov random 
fields with submodular energy functions we construct monotone perfect simulation. 
\end{abstract}

\section{Introduction}

Pairwise Markov random fields or Markov random fields with nonzero potential functions only for cliques of size two have a large number of applications in statistical physics, image processing and machine learning. Let us mention just a few very important particular cases and
applications. Ising \cite{Ising25}, Potts \cite{Potts52} 
and Solid-on-Solid (SOS) \cite{MS91,RS06} models
are the basic models in statistical physics. 
Metric Markov random fields and the generalized Potts
model are very successfully applied in image 
processing \cite{BVZ98,BVZ01,Setal08}. 
Pairwise Markov random fields are also
extensively used in the study of classification and
labeling problems, see e.g. \cite{BBM04,CDI98,KT02}.

Our own motivation to study pairwise Markov random fields comes from the need to model the distribution of attributes
in social networks such as age, gender, interests. 
The fact that friends or acquaintances in social networks share common characteristics is widely observed in real networks and is referred to as homophily.
The property of homophily implies that we expect that
the more clustered social network members are, the more likely
they are to share same attribute.
Nowadays social networks are intensively researched 
by both sociologists and computer scientists. However, 
if one wants to check some hypotheses about social networks or to test some algorithm such as a sampling method, 
the researchers need a lot of social network examples 
to consider and to test. In \cite{ANT16} a model of synthetic 
social network with attributes has been proposed to
test subsampling chain-referral methods on many network instances
with various properties. The synthetic network model
of \cite{ANT16} is similar in spirit to the SOS model and
well represents the distribution of ordinal attributes 
such as age.
Here we study much more general model which could
be used to model ordinal as well as non-ordinal
attributes' distribution in social networks. Of course,
we hope that the results will also be of interest to
researchers from statistical physics and machine learning
communities.

Specifically, in the present work we consider a general
pairwise Markov random field and provide conditions
for rapid mixing of the associated Glauber dynamics.
Rapid mixing guarantees that we can quickly generate
many configurations of attributes corresponding to
a given Gibbs distribution or energy function. In the
important particular case of submodular energy functions,
we go a step further and construct a perfect simulation
which samples quickly without bias from the target
distribution. Our results significantly generalize
the corresponding results for the Ising model, see 
e.g. \cite{LevinMCaMT}. The proof in \cite{LevinMCaMT} 
relies on the particular size and values of the
interaction matrix.

Finally, we would like to note that even though our
model has some common features with the exponential random 
graph model (see e.g., \cite{RPKL07}), 
there are important differences between
these two models. The exponential random graph model
generates the graph, whereas our model assumes that
the graph is given and generates a configuration of
attributes over the graph.

\section{Model}

Let a graph $G = (V,E),$ $|V| = n$, be given. In addition, each
vertex $v$ has an attribute which takes a value from the finite
set $M = \{1,...,m\}$. We denote by $\sigma \in \Omega=M^n$ a configuration, where each vertex $v \in V$ takes its own certain value 
$\sigma(v) \in M$ of the attribute. In the present work we restrict
ourselves to the model with one attribute. Now we introduce symmetric {\it interaction} matrix $\mathbb{V}$ of size $m\times m$, and say, that the energy of configuration $\sigma$ is given by
$$
\varepsilon(\sigma) = \sum_{\{v_1, v_2\} \in E}\mathbb{V}(\sigma(v_1), \sigma(v_2)).
$$
Let us call $|\mathbb{V}|$ the maximum absolute value of matrix $\mathbb{V}$ elements. Next we consider \textit{Gibbs distribution} with respect to the introduced energy:
$$
\pi^*(\sigma) = \frac{e^{-\beta \varepsilon(\sigma)}}{\sum \limits_{\tau \in M^G}e^{-\beta \varepsilon(\tau)}} = Z^{-1}(\beta)e^{-\beta\varepsilon(\sigma)},
$$
where $\beta = \frac{1}{T}$ is some parameter, the inverse temperature 
of the system, and $Z(\beta)$ is the normalizing constant or, in
statistical physics terminology, the partition function. 
This distribution describes the {\it pairwise Markov
random field} over graph $G$. We shall also refer to this distribution 
as \textit{network attribute distribution}.

We would like to sample configurations from the distribution $\pi(\sigma)$ to test
various algorithms on a series of network realisations. 
However, the main
problem is that the probability space is enormous and it is impossible 
to sample from Gibbs distribution without additional techniques.
One such technique is Glauber dynamics, described just below and
another technique is monotone perfect simulations described in
detail in Section~5.

Let $\mathcal{N}(v)$ be the set of neighbours of vertex $v$.
Then, we define the \textit{local energy} $\varepsilon_i(\sigma, v)$
for vertex $v$ and value $i$ in configuration $\sigma$ as follows: 
$$
\varepsilon_i(\sigma, v) = \sum_{u \in \mathcal{N}(v)}\mathbb{V}(i, \sigma(u)). 
$$
This formula calculates energy in the neighbourhood of $v$ provided that the value of the attribute for $v$ was updated to $i$.
Then, we call the \textit{local distribution} for vertex $v$ in configuration $\sigma$ the probability distribution on set 
$\{1,2,\ ...\ ,m\}$ with respect to the local energy:
$$ p_i(\sigma, v) = {\mathbb{P}}(\sigma(v) \to i) := \frac{e^{-\beta \varepsilon_i(\sigma, v)}}{\sum\limits_{k \in M}e^{-\beta \varepsilon_k(\sigma,v)}} = Z^{-1}(\sigma,v, \beta)\cdot e^{-\beta \varepsilon_i(\sigma, v)},$$
which is the probability to update value in $v$ to $i$.

The Glauber dynamics is defined as follows:
\begin{enumerate}
\item Choose arbitrary starting distribution $\pi^0$ and 
then choose values for vertices according to $\pi^0$;
\item Choose uniformly random vertex $v$;
\item Update value for $v$ according to the local distribution;
\item Go to step 2.
\end{enumerate}

Let us denote by $\mathcal{X} = \{ X_t, t\geq 0 \}$ the Markov chain 
associated with the Glauber dynamics, with starting distribution $\pi^0$ and transition matrix $P = \{P_{\sigma, \tau}\}_{\sigma, \tau \in \Omega},$ 
$\ P_{\sigma,\tau} = \mathbb{P}\{X_{t+1} = \tau|X_{t} = \sigma\}$, which is associated with steps 2-3. If steps 2-3 are repeated $t$ times, $\pi^t$ will stand for the distribution on space of configurations at time moment $t$.
Sometimes we shall also use $P_{\sigma}^t(\cdot)$ to denote the probability distribution of $\mathcal{X}$ on $\Omega$ 
at time moment $t$ to emphasize that $\mathcal{X}$ starts 
from certain configuration $\sigma$.

Before we proceed further, let us notice that the introduced model implies some well-known particular cases. For example, 
$$\mathbb{V} = \begin{pmatrix}
1 & 0 & \cdots & 0 \\
0 & 1 & \cdots & 0 \\        
\vdots & \vdots & \ddots & \vdots \\
0 & 0 & \cdots & 1
\end{pmatrix}$$
corresponds to the Potts model. If $m=2$, then the Potts model becomes
the Ising model. If now we take $\mathbb{V}(i,j)=f(|i-j|)$ with some convex function
$f(\cdot)$, we obtain the metric Markov random field model extensively
used in image processing. In \cite{ANT16}, the Markov random field
with quadratic $f(\cdot)$ was used to model social networks with
ordinal attributes. The case $\mathbb{V}(i,j)=|i-j|$ corresponds to the SOS model. 

%It is easy to see, that for $\mathcal{X}$ general distribution is stationary, but the main goal is to prove that with some conditions on the system, the time, that is needed to get close to general distribution, is not very large. Then compare this result with simulation.

%The corresponding result for Ising model is already known and is provided in [1]. Proof of it rely on the particular size and values of matrix $\mathbb{V}$. Here we are about to generalise that fact.
\section{Preliminaries}

Here we give several well-known results, which we will use in sequel.

It is well-known, see e.g., \cite{Bremaud} and \cite{LevinMCaMT}, that the Markov chain $\mathcal{X}$ corresponding to the Glauber dynamics is reversible with the stationary distribution $\pi^*$.
\begin{Lemma}
Markov chain $\mathcal{X}$ is time-reversible with the stationary distribution given by
$\pi^*(\sigma)=Z^{-1}(\beta) e^{-\beta \varepsilon(\sigma)}$. 
In other words,
$$ 
\pi^*(\sigma)\cdot P_{\sigma, \tau}  = \pi^*(\tau)\cdot P_{\tau, \sigma},
$$
for all $\sigma, \tau \in \Omega.$
\end{Lemma}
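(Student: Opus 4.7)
The proof is a direct verification of the detailed balance equations, so the plan is to reduce to a single informative case and then compute.

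First, I would observe that in one step of the Glauber dynamics only one vertex has its value refreshed, so $P_{\sigma,\tau} = 0$ unless $\sigma$ and $\tau$ agree everywhere except possibly at a single vertex. If $\sigma$ and $\tau$ differ in two or more vertices, both sides of the identity to be proved are zero and there is nothing to check. If $\sigma = \tau$, the equality is trivial. Thus the content of the lemma reduces to the case where there is exactly one vertex $v$ at which $\sigma(v) = i \neq j = \tau(v)$, while $\sigma$ and $\tau$ agree elsewhere.

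In that case I would write out the transition probability explicitly: the chain picks $v$ with probability $1/n$ and then resamples its value, giving
$$
P_{\sigma,\tau} = \tfrac{1}{n}\, p_j(\sigma,v) = \tfrac{1}{n}\cdot \frac{e^{-\beta \varepsilon_j(\sigma,v)}}{Z(\sigma,v,\beta)},
\qquad
P_{\tau,\sigma} = \tfrac{1}{n}\cdot \frac{e^{-\beta \varepsilon_i(\tau,v)}}{Z(\tau,v,\beta)}.
$$
The first simplification I would make is the observation that $\varepsilon_k(\sigma,v)$ depends only on the values of $\sigma$ at the neighbours of $v$, and these coincide with the corresponding values of $\tau$ because $\sigma$ and $\tau$ agree off $v$. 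Hence $\varepsilon_k(\sigma,v)=\varepsilon_k(\tau,v)$ for every $k\in M$, and in particular $Z(\sigma,v,\beta)=Z(\tau,v,\beta)$.

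The key identity that makes everything collapse is the relation between the global and local energy differences: only the edges incident to $v$ contribute to $\varepsilon(\sigma)-\varepsilon(\tau)$, and summing $\mathbb{V}(\sigma(v),\sigma(u))-\mathbb{V}(\tau(v),\sigma(u))$ over $u\in\mathcal{N}(v)$ gives exactly
$$
\varepsilon(\sigma) - \varepsilon(\tau) \;=\; \varepsilon_i(\sigma,v) - \varepsilon_j(\sigma,v).
$$
Substituting this into the ratio $\pi^*(\sigma)P_{\sigma,\tau}/\pi^*(\tau)P_{\tau,\sigma}$, the normalizing factors $Z(\beta)$, $Z(\sigma,v,\beta)=Z(\tau,v,\beta)$ and the prefactor $1/n$ all cancel, and the remaining exponent is $-\beta\bigl[\varepsilon(\sigma)+\varepsilon_j(\sigma,v) - \varepsilon(\tau) - \varepsilon_i(\sigma,v)\bigr]$, which vanishes by the identity above. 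This yields the detailed balance equation and hence both reversibility and stationarity of $\pi^*$.

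There is no real obstacle here; the only step one has to be careful with is the algebraic match between the global energy difference and the local energy difference, since this is precisely what makes the pairwise Markov random field compatible with a single-site update rule. Everything else is direct substitution and cancellation.
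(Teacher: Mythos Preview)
Your argument is correct: the reduction to configurations differing at a single vertex, the observation that the local quantities $\varepsilon_k(\cdot,v)$ and $Z(\cdot,v,\beta)$ depend only on the values at neighbours of $v$ (hence coincide for $\sigma$ and $\tau$), and the identity $\varepsilon(\sigma)-\varepsilon(\tau)=\varepsilon_i(\sigma,v)-\varepsilon_j(\sigma,v)$ together yield detailed balance exactly as you describe.

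As for comparison with the paper: there is nothing to compare. The paper does not prove this lemma; it simply states it as well-known and refers to \cite{Bremaud} and \cite{LevinMCaMT}. Your write-up is thus more explicit than what the paper provides, and is the standard verification one finds in those references.
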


For two distributions $\pi_1, \pi_2$ on state space $\Omega$ we define the \textit{total variation} distance between them as
$$ 
|| \pi_1 - \pi_2 ||_{TV} = \frac{1}{2}\sum_{\sigma \in \Omega}|\pi_1(\sigma) - \pi_2(\sigma)|. 
$$
Let $\mu$ and $\nu$ be two distributions on the same state space $\Omega$. Pair of random variables $(X_{\mu},X_{\nu})$ forms \textit{coupling}, if it is distributed such that marginal distribution of $X_{\mu}$ is $\mu$ and marginal distribution of $X_{\nu}$ is $\nu$. 
The main motivation for introducing such term is the following lemma \cite{Bremaud}.

\begin{Lemma}
Let $\nu$ and $\mu$ be two probability distributions on $\Omega$. Then
$$ || \mu - \nu ||_{TV} = \inf \{\ {\mathbb{P}}(X_{\mu} \neq X_{\nu})\ |\ (X_{\mu},X_{\nu})\ is\ a\ coupling\ of\ \mu\ and\ \nu\ \}. $$
\end{Lemma}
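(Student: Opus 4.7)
The plan is to prove the identity by establishing both inequalities separately: a lower bound valid for every coupling, and a matching upper bound via an explicit optimal coupling.

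First I would record the equivalent dual representation $\|\mu-\nu\|_{TV} = \max_{A \subseteq \Omega}(\mu(A) - \nu(A))$, obtained by splitting the defining sum according to the sign of $\mu(\sigma)-\nu(\sigma)$ and observing that the positive and negative parts have equal mass (since $\mu$ and $\nu$ are probability measures). Then, for an arbitrary coupling $(X_\mu, X_\nu)$ and any $A \subseteq \Omega$, I would write
$$\mu(A) - \nu(A) = \mathbb{P}(X_\mu \in A) - \mathbb{P}(X_\nu \in A) \leq \mathbb{P}(X_\mu \in A,\ X_\nu \notin A) \leq \mathbb{P}(X_\mu \neq X_\nu),$$
after cancelling the common contribution of $\{X_\mu \in A,\ X_\nu \in A\}$. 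Taking the supremum over $A$ yields $\|\mu-\nu\|_{TV} \leq \mathbb{P}(X_\mu \neq X_\nu)$ for every coupling, hence $\|\mu-\nu\|_{TV} \leq \inf \mathbb{P}(X_\mu \neq X_\nu)$.

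For the reverse direction, I would exhibit a coupling that attains the bound. Set $p = \sum_{\sigma \in \Omega} \min(\mu(\sigma),\nu(\sigma))$; splitting $\mu(\sigma) = \min(\mu(\sigma),\nu(\sigma)) + (\mu(\sigma)-\nu(\sigma))^+$ and summing shows $p = 1 - \|\mu-\nu\|_{TV}$. Flip a biased coin of success probability $p$. On success, draw a common value $X$ from $\min(\mu,\nu)/p$ and set $X_\mu = X_\nu = X$; on failure, draw $X_\mu$ and $X_\nu$ independently from $(\mu-\min(\mu,\nu))/(1-p)$ and $(\nu-\min(\mu,\nu))/(1-p)$ respectively. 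A short check confirms the marginals are $\mu$ and $\nu$. Since $\mu-\min(\mu,\nu)$ is supported on $\{\sigma:\mu(\sigma)>\nu(\sigma)\}$ and $\nu-\min(\mu,\nu)$ on the disjoint set $\{\sigma:\mu(\sigma)<\nu(\sigma)\}$, on a failure $X_\mu \neq X_\nu$ almost surely, giving $\mathbb{P}(X_\mu \neq X_\nu) = 1 - p = \|\mu-\nu\|_{TV}$.

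There is no serious obstacle in this proof; the only points requiring care are the dual representation of the total variation distance and the verification of the marginals of the constructed coupling, together with the degenerate cases $p=0$ (where $\mu$ and $\nu$ have disjoint supports, so any product coupling works) and $p=1$ (where $\mu=\nu$ and the diagonal coupling works). Combining the two inequalities yields the stated identity with the infimum actually attained.
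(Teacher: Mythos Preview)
Your argument is correct and is in fact the standard proof of the coupling characterization of total variation (the maximal coupling construction). The paper itself does not prove this lemma at all: it merely states it with a citation to Br\'emaud's textbook, so there is no in-paper proof to compare against. Your write-up supplies exactly the details one would find in that reference, including the dual formula $\|\mu-\nu\|_{TV}=\max_A(\mu(A)-\nu(A))$, the trivial lower bound for any coupling, and the explicit optimal coupling showing the infimum is attained.
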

This lemma is very useful, because a comparison between distributions is reduced to comparison between random variables.

Here is one more lemma, which shows how the total variation distance from the stationary distribution can be estimated \cite{Bremaud,LevinMCaMT}.
\begin{Lemma}
Let $\sigma$ and $\tau$ be initial configurations from state space $\Omega$. Then
$$ || \pi^t - \pi^* ||_{TV} \leq \max \limits_{\sigma,\tau \in \Omega} || P_{\sigma}^t(\cdot) - P_{\tau}^t(\cdot) ||_{TV}. $$
\end{Lemma}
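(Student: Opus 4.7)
The plan is to exploit two facts: $\pi^t$ is the mixture of the time-$t$ distributions $P_\sigma^t(\cdot)$ weighted by the initial distribution $\pi^0$, and $\pi^*$ is the same kind of mixture but weighted by $\pi^*$ itself, since $\pi^*$ is stationary. Specifically, I would first write
\[
\pi^t(\cdot) = \sum_{\sigma \in \Omega} \pi^0(\sigma)\, P_\sigma^t(\cdot), \qquad \pi^*(\cdot) = \sum_{\tau \in \Omega} \pi^*(\tau)\, P_\tau^t(\cdot),
\]
the second identity being immediate from stationarity ($\pi^* P^t = \pi^*$).

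Next, using that $\sum_\sigma \pi^0(\sigma) = \sum_\tau \pi^*(\tau) = 1$, I would insert the other measure as a dummy in each sum to obtain the double-sum representation
\[
\pi^t(\cdot) - \pi^*(\cdot) = \sum_{\sigma,\tau \in \Omega} \pi^0(\sigma)\,\pi^*(\tau)\,\bigl(P_\sigma^t(\cdot) - P_\tau^t(\cdot)\bigr).
\]
Then I would apply the triangle inequality for the total variation norm (which follows directly from its definition as $\tfrac12 \sum |\cdot|$) together with the fact that $\pi^0(\sigma)\pi^*(\tau) \geq 0$, yielding
\[
\|\pi^t - \pi^*\|_{TV} \leq \sum_{\sigma,\tau \in \Omega} \pi^0(\sigma)\,\pi^*(\tau)\,\|P_\sigma^t(\cdot) - P_\tau^t(\cdot)\|_{TV}.
\]

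Finally, I would bound each term in the sum by the maximum $\max_{\sigma,\tau}\|P_\sigma^t(\cdot) - P_\tau^t(\cdot)\|_{TV}$ and use $\sum_{\sigma,\tau} \pi^0(\sigma)\pi^*(\tau) = 1$ to conclude the claimed inequality. There is no real obstacle here; the only subtle point to be careful about is that stationarity gives $\pi^*$ as a $\pi^*$-weighted mixture of $P_\tau^t$, which is what lets us pair configurations $\sigma$ and $\tau$ symmetrically in the difference. The rest is convexity of the total variation distance and a one-line maximization.
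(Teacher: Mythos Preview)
Your argument is correct and is the standard proof of this classical inequality. Note, however, that the paper does not actually prove this lemma: it is stated in the Preliminaries section as a known result, with citations to Br\'emaud and to Levin--Peres--Wilmer, and no proof is given in the paper itself. Your write-up matches the usual textbook derivation found in those references.
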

Now we introduce metric on configuration space $\Omega$.
Let $\rho(\cdot, \cdot)$ by definition be equal to
$$ \rho(\sigma, \tau) = \sum_{v \in V}|\sigma(v) - \tau(v)|.$$
\begin{Lemma}
Let $\alpha$ be such that for every two neighbor configurations $\sigma, \tau \\ (\rho(\sigma, \tau) = 1)$ corresponding random values $X_{\sigma}^1$ and $X_{\tau}^1$ satisfy an inequality
$$ {\sf E}\rho(X_{\sigma}^1, X_{\tau}^1) \leq e^{-\alpha}.$$ 
Then
$$\forall t\in \mathbb{N},\ \forall \sigma, \tau \in \Omega \to {\sf E}(\rho(X_{\sigma}^t, X_{\tau}^t)) \leq {\sf diam}(\Omega)\cdot e^{-\alpha t}.$$
\end{Lemma}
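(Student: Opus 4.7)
The approach is a path-coupling argument in the style of Bubley and Dyer. The proof splits naturally into two steps: first, promote the hypothesized one-step contraction from neighbor configurations ($\rho = 1$) to arbitrary pairs; then iterate this contraction in time using the Markov property.

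For the promotion step, fix arbitrary $\sigma, \tau \in \Omega$ and let $k = \rho(\sigma, \tau)$. Since $\rho$ is the $L^1$ metric on coordinates, one can build an interpolating chain $\sigma = \sigma_0, \sigma_1, \ldots, \sigma_k = \tau$ with $\rho(\sigma_i, \sigma_{i+1}) = 1$ for every $i$ (change one vertex's attribute by a single unit at each step). Glue the pairwise couplings of the consecutive $(X_{\sigma_i}^1, X_{\sigma_{i+1}}^1)$ into a single joint coupling $(Y_0, \ldots, Y_k)$ with the correct marginals on adjacent pairs; equivalently, use the natural grand coupling of the Glauber chain in which all copies share the chosen vertex and the same uniform driving variable at each step, so that the hypothesis applies within this grand coupling. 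The triangle inequality for $\rho$ applied pathwise, followed by linearity of expectation, yields
\[
{\sf E}\rho(X_\sigma^1, X_\tau^1) \leq \sum_{i=0}^{k-1}{\sf E}\rho(Y_i, Y_{i+1}) \leq k\cdot e^{-\alpha} = \rho(\sigma, \tau)\cdot e^{-\alpha}.
\]

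For the iteration in time, proceed by induction on $t$. The base case $t = 1$ is precisely the bound just obtained. For the inductive step, condition on the pair $(X_\sigma^t, X_\tau^t)$, apply the promoted one-step contraction at time $t+1$ (which is legitimate by the Markov property), and use the tower property of conditional expectation:
\[
{\sf E}\rho(X_\sigma^{t+1}, X_\tau^{t+1}) \leq e^{-\alpha}\cdot {\sf E}\rho(X_\sigma^t, X_\tau^t) \leq e^{-\alpha(t+1)}\cdot \rho(\sigma, \tau).
\]
Bounding $\rho(\sigma, \tau) \leq {\sf diam}(\Omega)$ gives the claimed conclusion.

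The only delicate ingredient is exhibiting a joint coupling of the whole interpolating chain that simultaneously realises each pairwise neighbor bound, but for Glauber dynamics this is supplied for free by the grand coupling sketched above; once that is in hand, the remainder is merely the triangle inequality plus a one-line induction.
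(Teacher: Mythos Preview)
Your argument is the standard Bubley--Dyer path-coupling proof, and it is correct. Note that the paper does not supply its own proof of this lemma: it is quoted without proof from \cite{LevinMCaMT}, where precisely this argument appears; the grand coupling you invoke is exactly the one the paper later writes out explicitly in equation~(1).
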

%\begin{proof}
%${\ }$\\
%\begin{enumerate}
%\item Let $\rho(\sigma, \tau) = d$, $(\sigma, \tau_1,\ ...\ \tau_{d-1}, \tau)$ be the shortest path from $\sigma$ to $\tau$ and $\rho(\sigma, \tau_1) = \rho(\tau_i, \tau_j) = \rho(\tau_{d-1}, \tau) = 1$ for all $i\neq j$.
%If we do one step from $\sigma, \tau$, then $$\forall \sigma, \tau \to {\sf E}\rho(X_{\sigma}^1, X_{\tau}^1) \leq {\sf E}(\rho(X_{\sigma}^1, X_{\tau_1}^1) +\ ...\ + \rho(X_{\tau_{d-1}}^1, X_{\tau}^1)) \leq \rho(\sigma, \tau) e^{-\alpha}.$$
%\item If we do the next step from $ X^1_{\sigma}, X^1_{\tau} $ we get the following:
%$$ {\sf E}(\rho(X^2_{\sigma}, X^2_{\tau})) = $$  $$ \sum_{\sigma_1, \tau_1 \in \Omega}\left ( {\mathbb P}_{\sigma, \tau}(X^1_{\sigma} = \sigma_1, X^1_{\tau} = \tau_1) \sum_{\sigma_2, \tau_2 \in \Omega}{\mathbb P}_{\sigma_1, \tau_1}(X^1_{\sigma_1} = \sigma_2, X^1_{\tau_1} = \tau_2)  \right ).$$
%We write here $\mathbb{P}_{\sigma, \tau}$ to stress that probability is taken with respect to coupling for correspond distributions.
%And since first part of this lemma holds particularly for $\sigma = \sigma_1, \tau = \tau_1$, then
%$$ {\sf E}(\rho(X^2_{\sigma}, X^2_{\tau})) \leq \sum_{\sigma_1, \tau_1 \in \Omega} {\mathbb P}_{\sigma, \tau}(X^1_{\sigma} = \sigma_1, X^1_{\tau} = \tau_1)\rho(\sigma_1, \tau_1)e^{-\alpha} \leq$$ $$ \leq \rho(\sigma, \tau)e^{-2\alpha}. $$
%For other $t$ proof is the same. That implies the statement of lemma.
%\end{enumerate}
%\end{proof}
Lemma 4 shows how the introduced property can be generalized from neighbor configurations to the whole space $\Omega$ for an arbitrary time moment.

For some $\varepsilon > 0$, the {\it mixing time} is defined
as follows:
$$
t_{mix}(\varepsilon) = 
\min(t \in \mathbb{N}\ | \ ||\pi^t - \pi||_{TV} < \varepsilon).
$$
Next lemma is based on Lemma~4 and it provides an upped bound for the mixing time with respect to $\alpha$.
\begin{Lemma}
Suppose $\alpha > 0$ is such that ${\sf E}(\rho(X_{\sigma}^1, X_{\tau}^1)) \leq e^{-\alpha}$ for all neighbour configurations  $\sigma, \tau$. Then
$$t_{mix} \leq \left \lceil \frac{1}{\alpha}[\ln({\sf diam}(\Omega)) + \ln(1/\varepsilon)] \right \rceil.$$
\end{Lemma}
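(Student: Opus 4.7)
The plan is to chain the three preceding lemmas into a single estimate and then invert the resulting exponential bound to solve for $t$. First I would invoke Lemma 4 with the hypothesis on neighboring configurations to obtain, for arbitrary $\sigma, \tau \in \Omega$,
$$\mathsf{E}\bigl(\rho(X_{\sigma}^t, X_{\tau}^t)\bigr) \leq \mathsf{diam}(\Omega) \cdot e^{-\alpha t}.$$

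Next I would pass from expected distance to a total-variation bound. Since $\rho$ takes values in $\mathbb{Z}_{\geq 0}$, on the event $\{X_{\sigma}^t \neq X_{\tau}^t\}$ we have $\rho(X_{\sigma}^t, X_{\tau}^t) \geq 1$, so Markov's inequality gives
$$\mathbb{P}\bigl(X_{\sigma}^t \neq X_{\tau}^t\bigr) \leq \mathsf{E}\bigl(\rho(X_{\sigma}^t, X_{\tau}^t)\bigr) \leq \mathsf{diam}(\Omega)\cdot e^{-\alpha t}.$$
Because the pair $(X_{\sigma}^t, X_{\tau}^t)$ produced by running the Glauber coupling from $\sigma$ and $\tau$ is a coupling of $P_{\sigma}^t(\cdot)$ and $P_{\tau}^t(\cdot)$, Lemma 2 yields $\|P_{\sigma}^t - P_{\tau}^t\|_{TV} \leq \mathsf{diam}(\Omega)\cdot e^{-\alpha t}$. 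Taking the maximum over $\sigma, \tau$ and applying Lemma 3 gives
$$\|\pi^t - \pi^*\|_{TV} \leq \mathsf{diam}(\Omega)\cdot e^{-\alpha t}.$$

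Finally I would solve for the smallest $t$ making the right-hand side at most $\varepsilon$. Requiring $\mathsf{diam}(\Omega)\cdot e^{-\alpha t} \leq \varepsilon$ is equivalent to $t \geq \frac{1}{\alpha}\bigl[\ln(\mathsf{diam}(\Omega)) + \ln(1/\varepsilon)\bigr]$, and taking the ceiling to enforce integrality of the mixing time yields the claimed bound.

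There is no real obstacle here: the argument is a routine assembly of Lemmas 2--4 together with the observation that $\rho \geq \mathbf{1}\{\,\cdot\neq\cdot\,\}$. The only point worth a brief justification is that the joint process built in the proof of Lemma 4 is indeed a valid coupling with the right marginals, so that Lemma 2 applies; this follows from the construction used there.
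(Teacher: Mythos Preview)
Your argument is correct and is precisely the standard derivation. Note, however, that the paper does not actually give its own proof of this lemma: it is stated as borrowed from \cite{LevinMCaMT}, with the remark that Lemma~5 ``is based on Lemma~4''. The preliminary Lemmas~2--4 are included exactly so that the reader can reconstruct the argument you have written, so your chaining of Lemmas~4, 2, and 3 together with the observation $\rho \geq \mathbf{1}\{\cdot \neq \cdot\}$ matches the intended route.
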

%\begin{proof}
%According to lemma 2 and lemma 4 for all $x,y \in \Omega$
%$$|| P_x^t(\cdot) - P_y^t(\cdot) ||_{TV} \leq \mathbb{P}(X_x^t \neq X_y^t) = \mathbb{P}(\rho(X_x^t, X_y^t) \geq 1) \leq $$ $$\leq {\sf E}\rho(X_x^t, X_y^t) \leq {\sf diam}(\Omega)e^{-\alpha t} \leq \varepsilon, $$
%and since $\alpha > 0$ last inequality provides the upper bound for $t_{mix}(\varepsilon)$.
%\end{proof}
Lemmas 4 and 5 are borrowed from \cite{LevinMCaMT}. 
Actually, for our following results it would be enough to refer only to Lemma~5. But we mention here intermediate steps to help a reader to better understand the proof of our main result.

\section{Main results}

We can now formulate the main result of this article which says that
under certain conditions the Glauber dynamics corresponding to the
general pairwise Markov random fields mixes rapidly.

\begin{Theorem}
Let $\triangle$ be the maximum degree of graph $G = (V,E),\ |V| = n$ and $\mathbb{V}$ be the interaction matrix. 
Let also $\beta$ be the inverse temperature and $M = \{1,2,\ ...\ ,m\}$ be the set of attribute values. If
$$
\beta < \frac{1}{4|\mathbb{V}|}\ln \left(1 + \frac{1}{\triangle m}\right), 
$$
then
$$ 
t_{mix} \leq \left  \lceil \frac{n(\ln(n)+ \ln(m-1) + \ln(\frac{1}{\varepsilon}))}{1 - \triangle m(e^{4\beta |\mathbb{V}|} - 1)}\right \rceil. 
$$
\end{Theorem}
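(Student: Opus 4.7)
The plan is to reduce the theorem to an application of Lemma~5 by exhibiting a constant $\alpha>0$ satisfying ${\sf E}\rho(X_\sigma^1,X_\tau^1)\le e^{-\alpha}$ for any pair of neighbour configurations $\sigma,\tau$ (i.e.\ those differing at exactly one vertex $v$ with $|\sigma(v)-\tau(v)|=1$). Once such an $\alpha$ is found, Lemma~5 immediately yields $t_{mix}\le \lceil \alpha^{-1}(\ln{\sf diam}(\Omega)+\ln(1/\varepsilon))\rceil$, and since $\rho$ is additive over vertices we have ${\sf diam}(\Omega)=n(m-1)$, which supplies the $\ln n+\ln(m-1)$ in the numerator.

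To obtain $\alpha$ I would use the natural Glauber coupling: pick the same uniformly random vertex $w$ in both chains, and then couple the updates at $w$ optimally (an $\ell^1$-minimising coupling of the two local distributions). I would split the analysis into three cases according to the position of $w$ relative to $v$. If $w=v$, then the local distribution depends only on $\mathcal{N}(v)$, on which $\sigma$ and $\tau$ agree, so the coupled update eliminates the disagreement and contributes $0$ to $\rho$; this happens with probability $1/n$. If $w\notin\mathcal{N}(v)\cup\{v\}$, the local distributions at $w$ again coincide, the coupled chains assign the same new value, and $\rho$ remains $1$. If $w\in\mathcal{N}(v)$, then $\rho(X_\sigma^1,X_\tau^1)\le 1+|X_\sigma^1(w)-X_\tau^1(w)|$.

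The key technical step is bounding ${\sf E}|X_\sigma^1(w)-X_\tau^1(w)|$ in the third case. Writing out $\varepsilon_i(\sigma,w)-\varepsilon_i(\tau,w)=\mathbb{V}(i,\sigma(v))-\mathbb{V}(i,\tau(v))$ gives $|\varepsilon_i(\sigma,w)-\varepsilon_i(\tau,w)|\le 2|\mathbb{V}|$ for every $i\in M$. Since the normalising constants $Z(\sigma,w,\beta)$ and $Z(\tau,w,\beta)$ also differ by a multiplicative factor in $[e^{-2\beta|\mathbb{V}|},e^{2\beta|\mathbb{V}|}]$, the ratio $p_i(\tau,w)/p_i(\sigma,w)$ lies in $[e^{-4\beta|\mathbb{V}|},e^{4\beta|\mathbb{V}|}]$. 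Thus $\sum_i|p_i(\sigma,w)-p_i(\tau,w)|\le e^{4\beta|\mathbb{V}|}-1$, and using the trivial bound $|X_\sigma^1(w)-X_\tau^1(w)|\le m$ together with the optimal coupling, I would obtain ${\sf E}|X_\sigma^1(w)-X_\tau^1(w)|\le m(e^{4\beta|\mathbb{V}|}-1)$.

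Summing the three cases with weights $1/n$ and using $\deg(v)\le\triangle$,
\[
{\sf E}\rho(X_\sigma^1,X_\tau^1)\le 1-\frac{1}{n}\bigl[1-\triangle m(e^{4\beta|\mathbb{V}|}-1)\bigr].
\]
The hypothesis $\beta<\tfrac{1}{4|\mathbb{V}|}\ln(1+\tfrac{1}{\triangle m})$ is exactly what is needed to make the bracket positive, so one can set $\alpha=-\ln(1-c/n)\ge c/n$ with $c=1-\triangle m(e^{4\beta|\mathbb{V}|}-1)$, and then Lemma~5 with ${\sf diam}(\Omega)=n(m-1)$ yields the stated bound. The only genuinely delicate step is the multiplicative control of the local distributions; the rest is bookkeeping to show that the expected disagreement grows by at most $\triangle m(e^{4\beta|\mathbb{V}|}-1)$ per step while the chains have exactly one chance in $n$ to merge.
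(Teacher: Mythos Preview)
Your proposal is correct and follows the same path-coupling framework as the paper: both reduce to Lemma~5 via the three-case analysis of the update vertex $w$ relative to the disagreement vertex $v$, and both arrive at the contraction bound
\[
{\sf E}\rho(X_\sigma^1,X_\tau^1)\le 1-\frac{1}{n}\bigl[1-\triangle m(e^{4\beta|\mathbb{V}|}-1)\bigr],
\]
after which Lemma~5 with ${\sf diam}(\Omega)=n(m-1)$ finishes the job.

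The only genuine difference is in how the key quantity ${\sf E}|X_\sigma^1(w)-X_\tau^1(w)|$ is handled when $w\in\mathcal{N}(v)$. The paper fixes a specific quantile (prefix-sum) coupling, proves an identity (its Lemma~6) expressing the expected gap as $\sum_k|pref_k(\sigma,w)-pref_k(\tau,w)|$, bounds this by $m\sum_i|p_i(\sigma,w)-p_i(\tau,w)|$, and controls the last sum via an $a_ib_j-a_jb_i$ manipulation. You instead take the TV-optimal coupling and go directly from the ratio bound $p_i(\tau,w)/p_i(\sigma,w)\in[e^{-4\beta|\mathbb{V}|},e^{4\beta|\mathbb{V}|}]$ to $\sum_i|p_i(\sigma,w)-p_i(\tau,w)|\le e^{4\beta|\mathbb{V}|}-1$, then multiply by the crude gap bound $m$. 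Your route is somewhat shorter for Theorem~1 on its own; the paper's explicit prefix-sum coupling, however, is reused verbatim to establish the monotonicity needed for the perfect-simulation construction (Lemma~7), so its additional machinery is not wasted.
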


We would like to notice that independently from temperature the mixing time is at least of order $n \ln(n)$. It is so, because achieving stationary distribution by iterating means that every vertex of the graph has to be updated at least once. As $n$ grows
to infinity, we must do order $n\ln(n)$ Markov chain steps 
to make the probability of updating each vertex at least once tending to 1. More details on various lower bounds can be
found in \cite{LevinMCaMT}.

Before we proceed to prove the theorem, let us also notice that it claims that the upper bound is of order $n\log n$. 
The corresponding result for the Ising model has been
shown in e.g., \cite{LevinMCaMT}. The present extension
is not straightforward, since the proof in \cite{LevinMCaMT}
is based on the particular form of the interaction matrix
$\mathbb{V}$.

\begin{proof}
Let us choose two arbitrary configurations $\sigma$ and $\tau$ at time 0 and say that random vectors $X_{\sigma}^t$ and $X_{\tau}^t$ have distributions $P_{\sigma}^t(\cdot)$ and $P_{\tau}^t(\cdot)$, respectively. Then define $pref_k(\sigma, w), k\leq m$, as the {\it prefix sum} of probabilities to label $w$ with one of the first $k$ attribute values at the next step,
namely,
$$
pref_k(\sigma, w) = \sum_{i = 1}^k p_i(\sigma, w).
$$
Let us consider the following probability distribution of pair $(X_{\sigma}^t, X_{\tau}^t)$: first we uniformly at random choose a vertex $w$ to update (common for both configurations) and then we choose uniformly at random a value $U$ from $[0,1]$. Then we set new configurations $X_{\phi}^t(U,w), \phi \in \{\sigma, \tau \}$ at 
time $t$ by the relation
\begin{equation}
X_{\phi}^{t}(U,w) (\overline{w}) =
 \begin{cases}
   \phi(\overline{w}) &\overline{w} \neq w\\
    \min(k | pref_k(\phi,w) \geq U) &\overline{w} = w
 \end{cases}
 .
\end{equation}
where function $X_{\phi}^1:[0,1]\times V \rightarrow \Omega$ becomes a random vector, if $U$ and $w$ are random variables.

%\begin{equation}
%\begin{cases}
%X_{\sigma}^t(\overline{w})  =\min(k | pref_k(\sigma,w) \geq U) {\bf 1}_{\{\overline{w} = w\}} + X_{\sigma}^{t-1}(\overline{w}){\bf 1}_{\{\overline{w} \neq w\}},\\
%X_{\tau}^t(\overline{w})  =\min(k | pref_k(\tau,w) \geq U) {\bf 1}_{\{\overline{w} = w\}} + X_{\tau}^{t-1}(\overline{w}){\bf 1}_{\{\overline{w} \neq w\}}.
%\end{cases}
%\end{equation}

It is easy to see that distribution of pair $(X_{\sigma}^t(U,w), X_{\tau}^t(U,w))$ is coupling for $P_{\sigma}^t(\cdot)$ and $P_{\tau}^t(\cdot)$.

Then, we are going to find an $\alpha > 0$ from Lemma~4 for two neighbor configurations. Let $\sigma, \tau$ be two neighbor configurations with unique difference in vertex $v$, i.e., ${|\sigma(v)-\tau(v)|=1}$. Let also $w$ be a uniformly chosen random vertex. If $w = v$, then $$
\rho(X_{\sigma}^1(U,w), X_{\tau}^1(U,w)) = 0.
$$
If $w \notin \mathcal{N}(v) \cup \{v\}$, then
$$
\rho(X_{\sigma}^1(U,w), X_{\tau}^1(U,w)) = |\sigma(v) - \tau(v)| = 1.
$$
It is so, because in both cases local distributions for $w$ are the same for both configurations. And if $w \in \mathcal{N}(v)$, then 
$$
\rho(X_{\sigma}^1(U,w), X_{\tau}^1(U,w)) = |\sigma(v) - \tau(v)| + |X_{\sigma}^1(U,w)(w) - X_{\tau}^1(U,w)(w)|.
$$
According to probabilities of each case, we can write
\begin{equation}
{\sf E}\rho(X_{\sigma}^1(U,w), X_{\tau}^1(U,w)) = 1 - \frac{1}{n} + \frac{1}{n}\cdot \sum_{w \in \mathcal{N}(v)}{\sf E}| X_{\sigma}^1(U,w)(w) - X_{\tau}^1(U,w)(w) |.
\end{equation}
Thus, an upper bound for the sum in (2) is needed. The following lemma helps to achieve the result and is the key element of this work.
\begin{Lemma}
For arbitrary $\sigma, \tau \in \Omega$ and for all $w \in V$ the following equation holds
\begin{equation}
{\sf E}| X_{\sigma}^1(U,w)(w) - X_{\tau}^1(U,w)(w)| = \sum_{i = 1}^m| pref_{i}(\sigma, w) - pref_{i}(\tau ,w) |.
\end{equation}
\end{Lemma}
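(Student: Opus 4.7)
The plan is to reduce the expected absolute difference of two $\{1,\ldots,m\}$-valued random variables to a sum of probabilities of "level crossings", and then identify each such probability with a gap between prefix sums.

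\textbf{Step 1: Level-set decomposition of the absolute value.} For any two integers $a,b \in \{1,\ldots,m\}$, I will use the identity
\[
|a-b| = \sum_{i=1}^{m-1} \bigl|\mathbf{1}[a \leq i] - \mathbf{1}[b \leq i]\bigr|,
\]
which follows by noting that the summand equals $1$ precisely when $i$ lies between $\min(a,b)$ and $\max(a,b)-1$, and there are exactly $|a-b|$ such indices. Since the term $i=m$ contributes $0$ (both indicators are $1$), the sum can equivalently be extended up to $m$.

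\textbf{Step 2: Translate the indicators via the inverse-CDF rule.} By the definition in~(1), $X_\phi^1(U,w)(w)$ is the smallest $k$ with $pref_k(\phi,w) \geq U$. Since $pref_k(\phi,w)$ is nondecreasing in $k$, this gives the key equivalence
\[
X_\phi^1(U,w)(w) \leq i \quad \Longleftrightarrow \quad U \leq pref_i(\phi,w),
\]
for every $i \in \{1,\ldots,m\}$ and $\phi \in \{\sigma,\tau\}$.

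\textbf{Step 3: Combine and take expectation.} Setting $a = X_\sigma^1(U,w)(w)$ and $b = X_\tau^1(U,w)(w)$ in Step~1 and using Step~2,
\[
\bigl|\mathbf{1}[a \leq i] - \mathbf{1}[b \leq i]\bigr| = \mathbf{1}\bigl[U \text{ lies between } pref_i(\sigma,w) \text{ and } pref_i(\tau,w)\bigr].
\]
Because $U$ is uniform on $[0,1]$, taking expectation of this indicator yields exactly $|pref_i(\sigma,w) - pref_i(\tau,w)|$. Summing over $i$ and using linearity of expectation gives the claimed identity.

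\textbf{Main obstacle.} There is no conceptual difficulty; the only care needed is the bookkeeping around strict vs.\ non-strict inequalities on $U$ (this only affects a Lebesgue-null set, so it does not influence the expectation) and the observation that the $i=m$ term vanishes, reconciling the $\sum_{i=1}^{m-1}$ that arises naturally with the $\sum_{i=1}^{m}$ appearing in the stated equation~(3).
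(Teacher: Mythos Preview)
Your proof is correct. It is essentially the Fubini-dual of the paper's argument, but the packaging is different enough to be worth a comparison.

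The paper partitions $[0,1]$ by the $2m$ points $\{pref_i(\sigma,w),pref_i(\tau,w)\}_{i\le m}$ into subsegments $l_k$, writes the expectation as $\sum_k \mathrm{length}(l_k)\,|h_{\sigma,k}-h_{\tau,k}|$, and then argues combinatorially that each $l_k$ is contained in exactly $|h_{\sigma,k}-h_{\tau,k}|$ of the intervals $[pref_i(\tau,w),pref_i(\sigma,w)]$ (or their reflections), so its length is counted that many times in the right-hand side. In effect the paper sums over subsegments first and levels $i$ second. You instead sum over levels first: the layer-cake identity $|a-b|=\sum_{i}|\mathbf{1}[a\le i]-\mathbf{1}[b\le i]|$ together with the inverse-CDF equivalence $\{X_\phi\le i\}=\{U\le pref_i(\phi,w)\}$ turns each term directly into a prefix-gap without ever needing to introduce or track the subsegments $l_k$. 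Your route is shorter and avoids the somewhat delicate counting of which $l_k$ sits in which prefix interval; the paper's route has the mild advantage of making the geometric picture on $[0,1]$ explicit, which some readers may find more concrete. Your remarks about the null-set issue with strict vs.\ non-strict inequalities and about the vanishing $i=m$ term are exactly the right bookkeeping points.
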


\begin{proof}
The expectation in (3) is based on uniform random variable $U$ distributed on $[0,1]$. Let us place on segment $[0,1]$ precisely $m$ red points that correspond to $pref_i(\sigma, w)$ and $m$ blue points that correspond to $pref_i(\tau, w)$, $1 \leq i \leq m$. Since $pref_m(\sigma, w) = pref_m(\tau, w) = 1$, we have $2m - 1$ disjoint (with no common internal points) subsegments with red or blue endpoints (some subsegments may have length 0), they form a set $\{l_k\}_{k=1}^{2m-1}$. Let subsegment $l_k$ have a value $h_{\sigma,k}$, if $h_{\sigma,k}$ satisfies $l_k \subset [pref_{h_{\sigma,k} - 1}(\sigma, v), pref_{h_{\sigma,k}}(\sigma, w)]$. Thus, by definition the mean of $|X_{\sigma}^1(U,w)(w) - X_{\tau}^1(U,w)(w)|$ is
$$ {\sf E}|X_{\sigma}^1(U,w)(w) - X_{\tau}^1(U,w)(w)| = \sum_{k = 1}^{2m-1}{\sf length}(l_k)\cdot |h_{\sigma, k} - h_{\tau, k}|.  $$
In other words, the length of $l_k$ appears in the expectation as many times as the difference between the values of the attribute for updates in $\sigma$ and $\tau$. Therefore, we now calculate the number of times that the length of each subsegment is added to the result in the right hand side of the above equality.
Towards this goal, for the moment let us fix $k$
and let $h_{\sigma, k} = a$, $h_{\tau, k} = b$ and without loss of generality $b \geq a$. Thus, the following series of inequalities hold
$$
\begin{cases}
pref_a(\sigma,w) \geq pref_a(\tau, w),\\
pref_{a+1}(\sigma,w) \geq pref_{a+1}(\tau, w),\\
...\\
pref_{b}(\sigma,w) \geq pref_b(\tau, w).
\end{cases}
$$
Let us identify terms  
$|pref_{i}(\sigma, w) - pref_{i}(\tau ,w)|$ in (3) which contain
the contribution from the subsegment $l_k$.
The length of $l_k$ is added for the first time in the right
hand side of (3) for $i=a$, because according to the definition of $a$ the minimum $i$ such that segment $[0, pref_{i}(\sigma, w)]$ contains $l_k$ is $i = a$, meantime $pref_a(\tau, w)$ does not contain this subsegment. Second time it is added for $i = a+1$ and so on, the last time it is added for $i = b-1$, which comes from definition of $b$. Hence, $l_k$ is added exactly $b-a$ times.
This establishes equivalence between the sums and completes the proof of the lemma. 
\hfill $\Box$
\end{proof}

Actually, this lemma will be used only for neighbor configurations $\sigma, \tau$, as it was mentioned before Lemma~6. Recall that Lemma~4 and then Lemma~5 give us an upper bound on the mixing time, but to apply them we need to obtain the corresponding inequalities on neighbour configurations. Therefore, we give a uniform upper bound for (3). For convenience we introduce
%$$ S_i = \sum_{u \in \mathcal{N}(w)\setminus\{v\}}(i - \sigma(u))^2 = \sum_{u \in \mathcal{N}(w)\setminus\{v\}}(i - \tau(u))^2,$$
%$$ a_i = \exp\left (-\beta\sum_{u\in \mathcal{N}(w)}(i - \sigma(u))^2 \right ) = \exp\left (-\beta (S_i + (i - \sigma(v))^2) \right ),$$
%$$ b_i =  \exp\left (-\beta\sum_{u\in \mathcal{N}(w)}(i - \tau(u))^2 \right ) = \exp\left (-\beta (S_i + (i - \tau(v))^2) \right ).$$
$$ S_i = \sum_{u \in \mathcal{N}(w)\setminus\{v\}}\mathbb{V}(i, \sigma(u)) = \sum_{u \in \mathcal{N}(w)\setminus\{v\}}\mathbb{V}(i, \tau(u)),$$
$$ a_i = \exp\left (-\beta\sum_{u\in \mathcal{N}(w)}\mathbb{V}(i, \sigma(u)) \right ) = \exp\left (-\beta (S_i + \mathbb{V}(i, \sigma(v))) \right ),$$
$$ b_i =  \exp\left (-\beta\sum_{u\in \mathcal{N}(w)}\mathbb{V}(i, \tau(u)) \right ) = \exp\left (-\beta (S_i + \mathbb{V}(i, \tau(v))) \right ).$$
Thus,
$$
\begin{cases}
p_i(\sigma, w) = \frac{a_i}{a_1 + \ ...\ a_{m}} \\
p_i(\tau, w) = \frac{b_i}{b_1 + \ ...\ + b_{m}}
\end{cases}.
$$
The following inequality will be useful:
%$$\frac{a_i b_k}{a_k b_i} = \exp(-\beta((i-\sigma(v))^2 + (k - \tau(v))^2 - (k - \sigma(v))^2 - (k - \tau(v))^2)),$$
%hence if $\sigma(u) +1 = \tau(u)$, then
%\begin{equation}
%\frac{a_i b_k}{a_k b_i} = \exp(2\beta(i-k)).
%\end{equation}
\begin{equation}
\frac{a_i b_k}{a_k b_i} = \exp(-\beta(\mathbb{V}(i, \sigma(v)) + \mathbb{V}(k, \tau(v)) - \mathbb{V}(k, \sigma(v)) - \mathbb{V}(i, \tau(v))) \leq e^{4\beta|\mathbb{V}|}.
\end{equation}
Then, the upper bound for (3) can be derived as follows:
$$\sum_{k = 1}^m| pref_{k}(\sigma, w) - pref_{k}(\tau ,w) | \leq \sum_{k=1}^{m}\sum_{i = 1}^k|p_i(\sigma, w) - p_i(\tau, w)| \leq $$
$$\leq m \sum_{i = 1}^m|p_i(\sigma, w) - p_i(\tau, w)| =  m\sum_{i=1}^m\left | \frac{a_i}{a_1 +...+a_{m}} - \frac{b_i}{b_1 +...+ b_{m}}\right| \leq$$
$$ \leq \frac{m}{(a_1+...+a_{m})(b_1+...+b_{m})}\sum_{i = 1}^m |a_i(b_1 +...+b_m) - b_i(a_1 + ...+a_m)| \leq $$ $$\leq \frac{m}{(a_1+...+a_{m})(b_1+...+b_{m})}\sum_{i=1}^m \sum_{j = 1}^m |a_i b_j - a_j b_i|  \leq 
$$
\begin{equation}
\leq \frac{m}{(a_1+...+a_{m})(b_1+...+b_{m})}\sum_{i=1}^m \sum_{j = 1}^m a_j b_i \left 
|e^{4\beta|\mathbb{V}|} - 1 \right | \leq 
m\left(e^{4\beta |\mathbb{V}|} - 1\right).
\end{equation}
And now collecting together (2), (3) and (5), we obtain
\begin{equation}
{\sf E}\rho(X_{\sigma}^1, X_{\tau}^1) \leq  1 - \frac{1 - \triangle m e^{4\beta|\mathbb{V}|}}{n} 
\leq \exp\left (-\frac{1 - \triangle m(e^{4\beta |\mathbb{V}|} - 1)}{n} \right ).
\end{equation}
Indeed, the diameter of $\Omega$ is equal to $n(m-1)$ and it corresponds to the distance between configurations $\hat{1} = (1,1, \ ...\ ,1)$ and $\hat{m} = (m,m,\ ...\  ,m)$. Now invoking Lemma~5 with $\alpha$ provided by (6), we obtain the upper bound
for $t_{mix}(\varepsilon)$ given in the theorem statement.
\hfill $\Box$
\end{proof}

Once we proved the theorem, we can think about modifications of the interaction matrix $\mathbb{V}$ and their influence on the model. It is easy to see from the definition of the Gibbs distribution that if we consider matrix $c\mathbb{V}$, where each element of matrix $\mathbb{V}$ is multiplied by a factor $c$, 
we obtain a new probability distribution on the configuration space $\Omega$ which is actually equal to the Gibbs distribution for the pair $\mathbb{V}$ and $c\cdot\beta$. Moreover, if we
add some constant $d$ to all elements of matrix $\mathbb{V}$, then the distribution will not change at all. 
Now we notice that $|\mathbb{V}|$ is mentioned in Theorem~1 and
we can diminish it to some extent. This results in the following
refinement.

\begin{col}
Let $\triangle$ be the maximum degree of graph $G = (V,E),\ |V| = n$ and $\mathbb{V}$ be the interaction matrix. Let also $\beta$ be the inverse temperature and $M = \{1,2,\ ...\ ,m\}$ be the set of attribute values. Let also $$K =  \frac{\max\limits_{x,y}\mathbb{V}(x,y) - \min\limits_{x,y}\mathbb{V}(x,y)}{2}.$$ 
If
$$\beta < \frac{1}{4K}\ln \left(1 + \frac{1}{\triangle m}\right), $$
then
$$ t_{mix} \leq \left  \lceil \frac{n(\ln(n)+ \ln(m-1) + \ln(\frac{1}{\varepsilon}))}{1 - \triangle m(e^{4\beta K} - 1)}\right \rceil. $$
\end{col}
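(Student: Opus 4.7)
The plan is to exploit the two invariances of the model observed in the paragraph preceding the corollary, namely that shifting the interaction matrix by an additive constant leaves both the Gibbs distribution and the Glauber transitions unchanged, and then to reduce the corollary to a direct application of Theorem~1 to a shifted matrix whose entries are symmetric around zero.

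Concretely, I would set $d = \frac{\max_{x,y}\mathbb{V}(x,y) + \min_{x,y}\mathbb{V}(x,y)}{2}$ and define $\tilde{\mathbb{V}}(x,y) = \mathbb{V}(x,y) - d$. By construction $\tilde{\mathbb{V}}$ is symmetric and its entries lie in $[-K,K]$, so $|\tilde{\mathbb{V}}| = K$. The first step is to check that shifting by $d$ does not affect anything operational: the energy changes by $\sum_{\{v_1,v_2\}\in E} d = d|E|$, a constant independent of $\sigma$, so it cancels in the Gibbs normalization; similarly, for each fixed $v$ the local energy $\varepsilon_i(\sigma,v)$ picks up the $i$-independent constant $d\,|\mathcal{N}(v)|$, which also cancels in the local distribution $p_i(\sigma,v)$. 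Consequently the Glauber chain driven by $\tilde{\mathbb{V}}$ is identical, transition by transition, to the one driven by $\mathbb{V}$, and in particular they share the same mixing time.

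The second step is to apply Theorem~1 directly to the pair $(\tilde{\mathbb{V}},\beta)$. The hypothesis of the theorem becomes $\beta < \frac{1}{4|\tilde{\mathbb{V}}|}\ln(1+\frac{1}{\triangle m}) = \frac{1}{4K}\ln(1+\frac{1}{\triangle m})$, which is exactly the assumption of the corollary, and the resulting bound
\[
t_{mix} \leq \left\lceil \frac{n(\ln(n)+\ln(m-1)+\ln(1/\varepsilon))}{1 - \triangle m(e^{4\beta |\tilde{\mathbb{V}}|}-1)}\right\rceil
\]
becomes the stated inequality once $|\tilde{\mathbb{V}}|$ is replaced by $K$.

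There is really no technical obstacle here: the entire content of the corollary is the observation that Theorem~1's dependence on $|\mathbb{V}|$ is not sharp, because one can always recenter the matrix so that its largest absolute entry equals the half-range $K$ rather than the maximum absolute value. The only thing to be a little careful with is the verification that recentering leaves the dynamics invariant (rather than just the stationary distribution), which requires checking that the constant shift in the local energy is independent of the candidate value $i$; this is immediate because the shift is $d$ times the vertex degree, not $d$ times anything depending on $i$.
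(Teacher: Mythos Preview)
Your proposal is correct and follows essentially the same approach as the paper: shift the interaction matrix by an additive constant so as to center its range around zero, reducing $|\mathbb{V}|$ to $K$, and then invoke Theorem~1. The paper's justification is even terser than yours; in particular, your explicit check that the shift leaves the Glauber transitions (not merely the stationary law) unchanged is a welcome clarification that the paper glosses over.
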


This refinement gives a slightly better bound for the mixing time.
However, we prefer to keep both formulations since the first variant could be just more notationally convenient in some setting.

In the case of quadratic dependencies in $\mathbb{V}$ we obtain even better upper bound.
\begin{Theorem}
If $\ \mathbb{V}(x,y) = (x - y)^2$, and
$$ \beta < \frac{1}{2(m-1)}\ln\left(1 + \frac{1}{\triangle m}\right),  $$
then 
$$ t_{mix} \leq \left  \lceil \frac{n(\ln(n)+ \ln(m-1) + \ln(\frac{1}{\varepsilon}))}{1 - \triangle m(e^{2\beta (m-1)} - 1)}\right \rceil. $$
\end{Theorem}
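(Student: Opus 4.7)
The plan is to run exactly the same coupling argument as in the proof of Theorem~1 and only sharpen the single estimate (4) by exploiting the quadratic form of $\mathbb{V}$. Every other step in the Theorem~1 proof—the path-coupling reduction to neighbour configurations, Lemma~6, the telescoping chain in (5), and the application of Lemma~5—is agnostic to the form of $\mathbb{V}$ and will be re-used verbatim.

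First I would revisit the cross-ratio identity
$$\frac{a_i b_k}{a_k b_i} = \exp\bigl(-\beta[\mathbb{V}(i,s) + \mathbb{V}(k,t) - \mathbb{V}(k,s) - \mathbb{V}(i,t)]\bigr),$$
where $s = \sigma(v)$, $t = \tau(v)$, and $v$ is the unique vertex at which the neighbour configurations differ. For $\mathbb{V}(x,y)=(x-y)^2$, expanding the bracket makes all pure squares in $i$, $k$, $s$, $t$ cancel, leaving the bilinear identity
$$(i-s)^2 + (k-t)^2 - (k-s)^2 - (i-t)^2 = 2(k-i)(s-t).$$
Using $|s-t|=1$ (neighbour configurations) and $i,k \in \{1,\dots,m\}$, this quantity is bounded in absolute value by $2(m-1)$, so
$$\frac{a_i b_k}{a_k b_i} \leq e^{2\beta(m-1)}.$$
This is the crucial improvement: the crude factor $4\beta|\mathbb{V}|=4\beta(m-1)^2$ from Theorem~1 is replaced by $2\beta(m-1)$.

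Plugging this sharper bound into the string of inequalities (5) gives
$$\sum_{k=1}^m |pref_k(\sigma,w) - pref_k(\tau,w)| \leq m\bigl(e^{2\beta(m-1)} - 1\bigr),$$
and substituting into (2) yields the contraction estimate
$${\sf E}\rho(X_{\sigma}^1, X_{\tau}^1) \leq 1 - \frac{1 - \triangle m(e^{2\beta(m-1)}-1)}{n} \leq \exp\!\left(-\frac{1 - \triangle m(e^{2\beta(m-1)}-1)}{n}\right),$$
valid whenever the exponent is positive, i.e.\ whenever $\triangle m(e^{2\beta(m-1)}-1)<1$, which is exactly the stated condition $\beta < \frac{1}{2(m-1)}\ln(1+\frac{1}{\triangle m})$. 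Feeding this $\alpha$ into Lemma~5 together with ${\sf diam}(\Omega)=n(m-1)$ produces the advertised bound on $t_{mix}$.

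There is no serious obstacle: the entire improvement hinges on the algebraic cancellation $(i-s)^2 + (k-t)^2 - (k-s)^2 - (i-t)^2 = 2(k-i)(s-t)$, and the only point one has to be careful about is to invoke $|s-t|=1$ at exactly this step (rather than bounding $|s-t|\le m-1$), since the rest of the argument has already been reduced to neighbour configurations by the path-coupling framework inherited from Theorem~1.
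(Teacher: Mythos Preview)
Your proposal is correct and follows exactly the paper's own route: the paper states explicitly that the only change from Theorem~1 is to replace inequality~(4) by the sharper cross-ratio bound obtained from the identity $(i-\sigma(v))^2+(k-\tau(v))^2-(k-\sigma(v))^2-(i-\tau(v))^2=2(k-i)(\sigma(v)-\tau(v))$ together with $|\sigma(v)-\tau(v)|=1$, and then feeds the resulting contraction factor into Lemma~5. Your write-up even catches the algebraic cancellation more cleanly than the paper's displayed formula (which contains a typo).
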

In this particular case $|\mathbb{V}| = (m-1)^2$ and the above
mentioned result is obviously more efficient than the one which 
can be obtained from Corollary~1.

\begin{proof}
The only difference in the proof of this theorem with respect to the previous results is in inequality (4). Recall that we use that inequality only for neighbour configurations $\sigma$ and $\tau$, which means that there is a vertex $v$ such that $\sigma$ and $\tau$ agree everywhere but in vertex $v$, and for that vertex it holds that
$|\sigma(v) - \tau(v)| = 1$. Since  $\mathbb{V}(x,y) = (x - y)^2$, we can rewrite the right hand side of inequality (4) in the following way:
 $$\frac{a_i b_k}{a_k b_i} = \exp(-\beta((i-\sigma(v))^2 + (k - \tau(v))^2 - (k - \sigma(v))^2 - (k - \tau(v))^2)),$$
Now, without loss of generality $\sigma(v) +1 = \tau(v)$, and then
\begin{equation}
\frac{a_i b_k}{a_k b_i} = \exp(2\beta(k-i)) \leq \exp(2\beta (m-1)).
\end{equation}
The latter provides us $\alpha$ for Lemma~5 and leads to 
the proof of the theorem. 
\hfill $\Box$
\end{proof}

\noindent
\textbf{Remark} All three results mentioned above show that there is fast mixing with respect to some condition on the temperature of the system. Actually, it is impossible to proof fast mixing in general case independently of the temperature. It is already shown for the Ising model, and we can generalize that fact and can demonstrate that for arbitrary $m$ and $m\times m$ matrix $\mathbb{V}$, where not all elements are equal, there exists a temperature and a graph such that mixing time has exponential order in terms of graph size. Moreover, 
we believe, that for every $m$ and $V$ there exists an example of a graph such that mixing is fast independently of the temperature.
This is a good question to address in future research.

\section{Simulations}

\subsection{Monotone perfect Markov Chain Monte Carlo}
In this section we are about to compare theoretical result with real simulations. Of course, for simulation one can just run the Glauber
dynamics and use the bounds on the mixing time from Theorem~1 or
Corollary~1 to indicate the simulation stopping time. However,
if matrix $\mathbb{V}$ has some structure, it appears to be possible
to construct a monotone perfect Markov Chain Monte Carlo (MCMC) simulation which produces perfect sampling and has a natural stopping
rule. Our construction is based on the general recommendations
given in \cite{ProppWilson}. Towards this end, under coupling
described by equation (1), we need to show that for any two configurations $\sigma$ and $\tau$, such that $\sigma \preceq \tau$,
we have $X_{\sigma}^t(U,w) \preceq X_{\tau}^t(U,w)$, where the order $\preceq$ means that for all vertices $v \in V$ it holds that $\sigma(v) \leq \tau(v)$. Unfortunately, this is true not for any matrix $\mathbb{V}$
and here, unlike in Theorem~1, we have to impose additional restrictions on $\mathbb{V}$.

Let us call matrix $\mathbb{V}$ \textit{submodular} if for all $i<j , k<l$ it holds that
$$
\mathbb{V}(i,k)+\mathbb{V}(j,l) \leq \mathbb{V}(i,l)+\mathbb{V}(j,k).
$$
For example, matrix $\mathbb{V}(x,y) = f(x-y)$ is \textit{submodular}, when $f$ is a convex function (in particular, the matrix $\mathbb{V}$ in Theorem~2 is submodular).

\begin{Lemma} Let $\sigma \preceq \tau$ and there is a coupling defined by equality (1) for submodular matrix $\mathbb{V}$. Then  $$
X_{\sigma}^t(U,w) \preceq X_{\tau}^t(U,w).
$$
\end{Lemma}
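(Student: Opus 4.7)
The plan is to reduce the monotonicity statement for arbitrary $t$ to a one-step claim via induction on $t$, using the fact that the coupling iterates the same update rule with independent $(U,w)$ at each step. Assuming $X_\sigma^t \preceq X_\tau^t$, the same choice of $w$ and $U$ is used at time $t+1$ for both chains; at all vertices $\overline{w} \neq w$ the order is preserved automatically, so the entire task reduces to showing that the single coordinate $w$ satisfies $X_\sigma^{t+1}(U,w)(w) \leq X_\tau^{t+1}(U,w)(w)$.

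By the definition of the coupling in equation (1), this one-coordinate inequality is equivalent to the prefix-sum ordering
\[
\mathrm{pref}_k(X_\sigma^t, w) \geq \mathrm{pref}_k(X_\tau^t, w) \quad \text{for every } k \in \{1,\dots,m\},
\]
since larger prefix sums for $\sigma$ mean the threshold $U$ is crossed no later under $\sigma$ than under $\tau$. Equivalently, the local distribution $p_\cdot(X_\tau^t,w)$ should stochastically dominate $p_\cdot(X_\sigma^t,w)$ in the usual order on $\{1,\dots,m\}$.

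The central step is therefore to show that, whenever $\sigma \preceq \tau$ (and in particular for the current pair $X_\sigma^t \preceq X_\tau^t$), the local distributions at any vertex $w$ satisfy this stochastic dominance. I would prove the stronger monotone likelihood ratio property: for any $i<j$,
\[
\frac{p_i(\sigma,w)}{p_j(\sigma,w)} \,\geq\, \frac{p_i(\tau,w)}{p_j(\tau,w)},
\]
which, after cancelling normalizing constants, amounts to
\[
\sum_{u \in \mathcal{N}(w)} \bigl[\mathbb{V}(j,\sigma(u)) - \mathbb{V}(i,\sigma(u))\bigr] \,\geq\, \sum_{u \in \mathcal{N}(w)} \bigl[\mathbb{V}(j,\tau(u)) - \mathbb{V}(i,\tau(u))\bigr].
\]
This inequality holds termwise: for each neighbour $u$ we have $\sigma(u) \leq \tau(u)$ and $i<j$, so the submodularity hypothesis applied to the pairs $(i,j)$ and $(\sigma(u),\tau(u))$ yields exactly $\mathbb{V}(j,\sigma(u)) - \mathbb{V}(i,\sigma(u)) \geq \mathbb{V}(j,\tau(u)) - \mathbb{V}(i,\tau(u))$. (If $\sigma(u)=\tau(u)$ the inequality is trivially an equality.)

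The final routine step is to convert the MLR inequality into the prefix-sum inequality. I expect the main obstacle to be purely notational: one must be careful about which direction of stochastic dominance corresponds to $\sigma \preceq \tau$, and make sure the submodularity condition is applied with the correct orientation of $(i,j)$ versus $(\sigma(u),\tau(u))$. The induction step itself is straightforward once the one-step monotonicity of the coupled update at $w$ is established.
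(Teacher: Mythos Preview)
Your argument is correct and follows essentially the same route as the paper: establish the one-step monotonicity of the coupled update at $w$ by comparing prefix sums, derive that comparison from submodularity, and extend to general $t$ by induction.

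The packaging differs in two small ways. The paper first uses transitivity of $\preceq$ to reduce to neighbor configurations differing at a single vertex $v$, and then rewrites $\mathrm{pref}_k(\sigma,w)-\mathrm{pref}_k(\tau,w)$ explicitly as a sum of terms $a_ib_j-a_jb_i$ over $i\le k<j$, each getting the right sign from submodularity via equation~(4). You instead prove the monotone likelihood ratio inequality directly for an arbitrary pair $\sigma\preceq\tau$, applying submodularity termwise at every neighbor $u$ of $w$, and then invoke the standard fact MLR $\Rightarrow$ stochastic dominance rather than manipulating the CDF difference by hand. Unpacked, the two computations are algebraically the same; your version avoids the reduction to neighbors and is slightly cleaner.

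Your warning about orientation is well taken: the correct inequality is $\mathrm{pref}_k(\sigma,w)\ge \mathrm{pref}_k(\tau,w)$, so that the inverse-CDF rule gives $X_\sigma^1(w)\le X_\tau^1(w)$. The paper's write-up actually states both the prefix inequality and the sign of the resulting sum the wrong way round; the two reversals cancel and the conclusion is unaffected, but your stated direction is the right one.
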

\begin{proof}
Suppose $t = 1$. Since the introduced order is transitive, we can
limit consideration to neighbor configurations. So, let $\sigma(u) = \tau(u)$ for all $u \in V\setminus \{v\}$ and $\sigma(v) +1= \tau(v)$. Let some vertex $w$ be chosen for update. If $w \notin \mathcal{N}(v)$ then the neighborhood of $w$ is the same for both configurations and it holds that $X_{\sigma}^1(U,w)(w) = X_{\tau}^1(U,w)(w)$. Then, consider $w \in N(v)$. It will be enough to prove that for all $k \leq m$ the following
inequality holds
$$
pref_{k}(\sigma, w) \leq pref_{k}(\tau ,w)
$$
to be sure that
$$
X_{\sigma}^1(U,w)(w) = \min(k | pref_k(\sigma,w) \geq U) \leq  \min(k | pref_k(\tau,w) \geq U) = X_{\tau}^1(U,w)(w).
$$
Here we will use notations of Lemma~6.
$$
pref_{k}(\sigma, w) - pref_{k}(\tau ,w)  = \sum_{i = 0}^k p_i(\sigma, w) - \sum_{i = 0}^k p_i(\tau, w) =
$$
$$ \sum_{i=0}^k  \frac{a_i}{a_0 +...+a_{m}} - \sum_{i=0}^k\frac{b_i}{b_0 +...+ b_{m}}  =$$

$$= \frac{ (a_0 +...+a_{k}) \cdot (b_0 +...+ b_{m}) -  (a_0 +...+a_{m}) \cdot (b_0 +...+ b_{k}) }{(a_0 +...+a_{m})(b_0 +...+ b_{m})} =$$

$$ = \frac{ (a_0 +...+a_{k}) \cdot (b_{k+1} +...+ b_{m}) -  (a_{k+1} +...+a_{m}) \cdot (b_0 +...+ b_{k}) }{(a_0 +...+a_{m})(b_0 +...+ b_{m})} = $$

$$ \frac{1}{(a_0+...+a_{m})(b_0+...+b_{m})}\sum_{i\leq k <j}^m( a_i b_j - a_j b_i ) \leq 0.$$
The last inequality holds since each summand is at most zero: it is provided by equation (4), submodular property of matrix $\mathbb{V}$ and the fact that summation is performed with $i<j$.  By induction argument
the proof immediately extends for arbitrary $t$. \hfill $\Box$
\end{proof}

%Let us consider a function $\phi: \Omega \times[0,1]\times V \rightarrow \Omega$, which updates the current configuration according to the step of coupling defined in equality (1). More precisely,
%\begin{equation*}
%\phi(\sigma,U,w) (v) =
% \begin{cases}
%   \sigma(v) &v \neq w\\
%   X_{\sigma}^1(U,v) &v=w
% \end{cases}
%\end{equation*}
Now we can propose the following algorithm:

\begin{algorithm}
\caption{Monotone perfect MCMC}
\begin{algorithmic}
\State $U_t \gets \text{random uniform variables from the segment [0,1]}$
\State $w_t \gets \text{random uniform variables from the set $V$}$
\State $T \gets 1$
\Repeat
\State $upper\gets \hat{m}$
\State $lower\gets \hat{1}$

\For{$t = -T \ldots -1$}
\State $upper \gets X^1_{upper}(U_t,w_t)$
\State $lower \gets X^1_{lower}(U_t,w_t)$
\EndFor
\State $T \gets 2T$
\Until{$upper = lower$}
\State \Return $upper,T$
\end{algorithmic}
\end{algorithm}
It is needed to say that the algorithm uses the same random pair $(U_t,w_t)$ at the same $t$, that is why we initialize them only once during the first call.
The required number of steps for this algorithm is upper bounded by $4T_*$, where $T_*$ is the smallest T such that $upper$ and $lower$ values converge. In this case $T_*$ is a random value depending on $U_t$ and $w_t$. Having found $T$ such that $T<T_* \leq 2T$ one can make a binary search to find out the accurate value of $T_*$. This calculation has asymptotic complexity of order $T_*\ln T_*$.

According to \cite{ProppWilson}, we have:
$${\sf E} T_* \leq 2t_{mix}\cdot(1+\ln n + \ln m).$$
This gives an idea that the Glauber dynamics and Monotone perfect
MCMC are comparable in terms of computational requirements.
Of course, the advantage of the monotone perfect MCMC is that
it produces sampling from the exact stationary distribution.

\subsection{Numerical example with real network}
Let consider well-known social network with attributes {\it AddHealth} \cite{AddHealth1}. For our experiments, we take as attribute 
the grade (class) of a pupil at school. It is an ordinal attribute
in the interval between 7 and 12. It seems natural that this network has cluster structure based on class attribute, because the probability of friendship between two pupils is bigger if their classes are not so far
apart in time. For this purpose, as in \cite{ANT16}, we have chosen $6\times 6$ interaction matrix $\mathbb{V}(x,y) = (x - y)^2$. Since $\mathbb{V}$ is submodular, we can use monotone perfect MCMC. 
We have taken publically available AddHealth graph \cite{AddHealth2} with the number of vertices $n = 1996$ and with the maximum degree $\triangle = 36$. In this case Theorem~2 provides fast mixing for $\beta < 0.000461895$, or equivalently, for the temperature $> 2165$.

If we choose $\beta = 0.0002$, Theorem~2 gives the upper bound $27000$ on the mixing time while perfect MCMC algorithm makes about $20000-25000$ running steps. Moreover, if we choose $\beta$ bigger than provided by Theorem~2, e.g., about $0.04$, the perfect MCMC is still fast enough 
finishing approximately after $200000$ steps. Since we have a relation between the expectation of the number of steps in perfect MCMC and 
the mixing time, we realize that, on the one hand, our theorem is in agreement with experiment and, on the other hand, on that particular graph there is fast mixing on broader set of parameters. The question if it is possible to obtain a tighter mixing time estimate is an interesting direction for future research.

We have also tried to fit the value of $\beta$ for the AddHealth data using a variation of the method of moments (see e.g., \cite{S01}). Specifically, we tried to fit the simulated energy to the energy of the AddHealth data, which is equal to 12328. The perfect simulation algorithm converges in acceptable time for $\beta$ as low as 0.125,
which gives the energy level around 15000. We think it is a reasonable match. It is interesting that AddHealth social network is on the boundary of rapid mixing. This might not be a coincidence as a social network can self-organize to find a balance between sufficiently rapid mixing and division into communities. 
%\input{5_Matching}
%Согласно статье~\cite{Epp}, ………

%\newpage
%\input{to_do}


\begin{thebibliography}{99}
%\bibitem{alexeev} В.Е. Алексеев. "Верхняя оценка числа максимальных независимых множеств графа".
%\bibitem{Sap} А.А. Сапоженко. О числе независимых множеств в расширителях. // Дискретная математика. Том 13. Выпуск 1. 2001.

\bibitem{AddHealth1}
The National Longitudinal Study of Adolescent to Adult Health.
http://www.cpc.unc.edu/projects/addhealth. 

\bibitem{AddHealth2}
L.C. Freeman Social Networks Datasets, 
University of California, Irvine. 
http://moreno.ss.uci.edu/data.html.

\bibitem{ANT16} Avrachenkov, K., Neglia, G. and Tuholukova, A.,
Subsampling for chain-referral methods.
In Proceedings of {\it ASMTA 2016}.

\bibitem{BBM04} Basu, S., Bilenko, M. and Mooney, R.J., 
A probabilistic framework for semi-supervised clustering. In Proceedings of the 10th ACM SIGKDD, pp. 59-68, 2004.

\bibitem{BVZ98} Boykov, Y., Veksler, O. and Zabih, R., 
Markov random fields with efficient approximations. 
In Proceedings of {\it Computer vision and pattern recognition}, 
pp.648-655, 1998.

\bibitem{BVZ01} Boykov, Y., Veksler, O. and Zabih, R., Fast approximate energy minimization via graph cuts. 
{\it IEEE Transactions on Pattern Analysis and Machine Intelligence}, 23(11), pp.1222-1239, 2001.

\bibitem{Bremaud} Br\'emaud, P., {\it Markov chains: Gibbs fields, Monte Carlo simulation, and queues}, Texts in Applied Mathematics 
Vol.31, Springer, 1998.

\bibitem{CDI98} Chakrabarti, S., Dom, B. and Indyk, P., 
Enhanced hypertext categorization using hyperlinks. 
{\it ACM SIGMOD Record}, 27(2), pp.307-318, 1998.

\bibitem{Ising25} Ising, E., Beitrag zur theorie des ferromagnetismus. {\it Zeitschrift für Physik A Hadrons and Nuclei}, 31(1), pp.253-258, 1925.

\bibitem{KT02} Kleinberg, J. and Tardos, E., 
Approximation algorithms for classification problems with pairwise relationships: Metric labeling and Markov random fields. 
{\it Journal of the ACM}, 49(5), pp.616-639, 2002.

\bibitem{LevinMCaMT} Levin, D.A., Peres, Y. and Wilmer, E.L., 
{\it Markov chains and mixing times}. American Mathematical Soc. 2009.

\bibitem{MS91} Mazel, A.E. and Suhov, Y.M., Random surfaces with two-sided constraints: an application of the theory of dominant ground states. {\it Journal of Statistical Physics}, 64(1-2), pp.111-134, 1991.

\bibitem{Potts52} Potts, R.B., Some Generalized Order-Disorder Transformations. {\it Mathematical Proceedings}, 48(1), pp.106-109,
1952.

\bibitem{ProppWilson} Propp, J.G. and Wilson, D.B., Exact sampling with coupled Markov chains and applications to statistical mechanics. {\it Random Structures and Algorithms}, 9(1-2), pp.223-252,
1996.

\bibitem{RPKL07} Robins, G., Pattison, P., Kalish, Y. and Lusher, D., An introduction to exponential random graph (p*) models for social networks. {\it Social Networks}, 29(2), pp.173-191, 2007.

\bibitem{RS06} Rozikov, U.A. and Suhov, Y.M., Gibbs measures for SOS models on a Cayley tree. {\it Infinite Dimensional Analysis, Quantum Probability and Related Topics}, 9(03), pp.471-488, 2006.

\bibitem{S01}
Snijders, T.A.B. (2001). The statistical evaluation of social network dynamics.
{\it Sociological Methodology}, 31, pp.361-395, 2001.

\bibitem{Setal08} Szeliski, R., Zabih, R., Scharstein, D., Veksler, O., Kolmogorov, V., Agarwala, A., Tappen, M. and Rother, C., 
A comparative study of energy minimization methods for Markov random fields with smoothness-based priors. {\it IEEE Transactions on Pattern Analysis and Machine Intelligence}, 30(6), pp.1068-1080, 2008.


%\bibitem{Alon} N. Alon. Independent sets in regular graphs and sum-free subsets of finite groups. // Israel Journal of Mathematics, Vol. 73, No. 2. 1991.
%\bibitem{Kahn} J. Kahn. An Entropy Approach to the Hard-Core Model on Bipartite Graphs. // Combinatorics, Probability and Computing, 10, pp 219-237.
%\bibitem{Zhao} Y. Zhao. The Number of Independent Sets in a Regular Graph. // Combinatorics, Probability and Computing, 19, 315–320. 2009.
%\bibitem{Galvin} D. Galvin. An upper bound for the number of independent sets in regular graphs. July 20, 2009.
%\bibitem{fixedSize} T. Carroll, D. Galvin, P. Tetali. Matchings and Independent Sets of a Fixed Size in Regular Graphs. November, 2008.
%\bibitem{circulant} Richard Hoshino. Independence polynomials of circulant graphs. October 15, 2007.
%\bibitem{Entropy} J. Radhakrishnan. Entropy and Counting // IIT Kharagpur, Golden Jubilee Volume on Computational Mathematics, Modelling and Algorithms. New Delhi, 2001.

\end{thebibliography}
\end{document}